\newcommand{\tr}{\textnormal{tr}}
\newcommand{\ket}[1]{| #1 \rangle}
\newcommand{\proj}[2]{| #1 \rangle\!\langle #2 |}
\def\beq{\begin{equation}}
\def\eeq{\end{equation}}
\def\bq{\begin{quote}}
\def\eq{\end{quote}}
\def\ben{\begin{enumerate}}
\def\een{\end{enumerate}}
\def\bit{\begin{itemize}}
\def\eit{\end{itemize}}
\def\ra{\rightarrow}
\def\lb{\left(}
\def\rb{\right)}
\def\lset{\lbrace}
\def\rset{\rbrace}
\def\l|{\left|}
\def\r|{\right|}
\def\lbr{\left[}
\def\rbr{\right]}
\def\ident{\textnormal{id}}
\newcommand\C{\mathbbm{C}}
\newcommand\R{\mathbbm{R}}
\newcommand\N{\mathbbm{N}}
\newcommand\M{\mathcal{M}}
\newcommand{\U}{\mathcal{U}}
\newcommand{\ketbra}[1]{|#1\rangle\langle#1|}
\theoremstyle{plain}
\newtheorem{thm}{Theorem}[section]
\newtheorem{lem}[thm]{Lemma}
\newtheorem{cor}[thm]{Corollary}
\theoremstyle{definition}
\begin{document}
\title{\textbf{All unital qubit channels are $4$-noisy operations}}

\author{Alexander M\"uller-Hermes}
\author{Christopher Perry}

\affil{\small{QMATH, Department of Mathematical Sciences, University of Copenhagen, Universitetsparken 5, 2100 Copenhagen, Denmark}}

\maketitle
\date{\today}

\begin{abstract}
We show that any unital qubit channel can be implemented by letting the input system interact unitarily with a $4$-dimensional environment in the maximally mixed state and then tracing out the environment. We also provide an example where the dimension of such an environment has to be at least $3$.  
\end{abstract}

Within the set of complex $d\times d$ matrices $\M(\C^d)$ we denote the set of unitary matrices by $\mathcal{U}\lb\C^d\rb$. Physically, unitary matrices correspond to time evolutions of a closed quantum system via the linear map $\rho\mapsto U\rho U^\dagger$ acting on quantum states, i.e.~positive matrices $\rho\in (\M(\C^d))^+$ with $\text{Tr}(\rho)=1$. When a quantum system interacts with an environment, general time evolutions are given by quantum channels $T:\M(\C^d)\ra \M(\C^d)$, i.e.~completely positive and trace preserving linear maps. In the following we will only consider unital quantum channels, i.e.~where $T(\mathbbm{1}_d) = \mathbbm{1}_d$ for the unit matrix $\mathbbm{1}_d\in \M(\C^d)$. 

We will consider two natural subclasses of unital quantum channels: Mixed unitary channels and $n$-noisy operations. A quantum channel $T:\M(\C^d)\ra \M(\C^d)$ is called a mixed unitary channel if there exists a $k\in \N$, unitaries $U_i\in \U\lb\C^d\rb$ for $i\in \lset 1,\ldots ,k\rset$ and a probability distribution $\lset p_i\rset^k_{i=1}\subset \R_{\geq 0}$ such that 
\[
T(X) = \sum^k_{i=1} p_i U_iXU^\dagger_i
\] 
for any $X\in \M(\C^d)$. Physically, a mixed unitary quantum channel corresponds to the random application of a unitary channel $\rho\mapsto U_i\rho U_i^\dagger$ with probability $p_i$. 

A quantum channel $T:\M(\C^d)\ra \M(\C^d)$ is called an $n$-noisy operation if there exists a bipartite unitary $U\in \U\lb\C^d\otimes \C^n\rb$ such that 
\begin{equation}
T(X) = \text{Tr}_E\lb U\lb X\otimes \frac{\mathbbm{1}_n}{n}\rb U^\dagger\rb
\label{equ:nNoisy}
\end{equation}
for any $X\in \M(\C^d)$. Here $\text{Tr}_E = \ident_{d}\otimes \text{Tr}$ denotes the partial trace on the second tensor factor (i.e.~the system corresponding to the environment). Physically, an $n$-noisy operation arises from letting the system interact unitarily with an $n$-dimensional environment in the maximally mixed state and then forgetting about the environment system. 

To our knowledge, the class of $n$-noisy operations has been first introduced in the context of quantum thermodynamics~\cite{horodecki2003reversible}. Since the maximally mixed state is the thermal state of a Hamiltonian proportional to $\mathbbm{1}_n$, this class can be considered as a toy model for the possible time evolutions of a system interacting with a simple thermal bath. Within the quantum information theory literature, such maps also appear in the study of open quantum systems under the guise of $k$-unistochastic maps \cite{zyczkowski2004duality,musz2013unitary} which are identical to $d^k$-noisy operations. 

An important generalization of $n$-noisy operations is given by the class of factorisable maps. These have been introduced in \cite{anantharaman2006ergodic} in the general framework of operator algebras and later studied in the special case of matrix algebras, eventually leading to a counterexample to the asymptotic quantum Birkhoff conjecture \cite{haagerup2011factorization,haagerup2015asymptotic}. Such channels have a form similar to \eqref{equ:nNoisy} with the ``environment'' modeled by a finite von Neumann algebra $\mathcal{N}$ with its unit replacing $\mathbbm{1}_d$ in \eqref{equ:nNoisy} and with a normal faithful tracial state on $\mathcal{N}$ replacing the normalized trace (we refer to \cite{haagerup2011factorization,haagerup2015asymptotic} for more details on this class of maps). It has been shown in \cite{haagerup2011factorization} that any mixed unitary channel is a factorisable map (factorizing in an appropriate sense through an abelian von Neumann algebra). For dimensions $d\geq 3$, the factorisable maps form a strict subset of the unital quantum channels and the Holevo-Werner channel $T:\M(\C^3)\ra \M(\C^3)$ given by
\[
T(X) = \frac{1}{2}\lb \text{Tr}(X)\mathbbm{1}_3 - X^T\rb
\]
is an example of a unital channel that is not factorisable~\cite{landau1993birkhoff}. 

The above results leave open the question regarding the relationship between the sets of mixed unitary quantum channels and $n$-noisy operations, i.e.~whether any mixed unitary quantum channel can be written as in \eqref{equ:nNoisy}. Note that in the context of state transformations, $n$-noisy operations are as powerful as unital channels: For any fixed quantum states $\rho,\sigma\in \M(\C^d)^+$, there exists a unital quantum channel mapping $\rho$ to $\sigma$ iff there exists a $d$-noisy operation achieving the same \cite{chefles2002quantum, horodecki2003reversible, scharlau2016quantum}. The problem is much harder when equality of the channels is required. 

Recently, this problem was solved by Musat \cite{Magdalena2017} with examples of mixed unitary channels in all dimensions $d\geq 3$ that are not $n$-noisy operations for any $n\in\N$. It has been open whether such counterexamples could also exist for dimensions $d=2$. Note that for $d=2$ the sets of unital quantum channels and mixed unitary channels coincide \cite{ruskai2002analysis} and in this case the question can be rephrased to: Is every unital qubit quantum channel an $n$-noisy operation? 

In Section \ref{sec:Main} we answer the above question, showing that any unital quantum channel $T:\M(\C^2)\ra \M(\C^2)$ is a $4$-noisy operation, i.e.~can be written as \eqref{equ:nNoisy} with $n=4$. It has been previously noted by Musz et al.~\cite{musz2013unitary} that $n=2$ is not enough to achieve this. For completeness we present this result in a cleaner form in Section \ref{sec:Kraus}. We leave open the question as to whether every unital qubit quantum channel is a $3$-noisy operation.

\section{Main results}
\label{sec:Main}

Our proof follows a general idea of Choi and Wu~\cite{CHOI199025} on representing matrices as convex combinations of rank-$k$ projections. We begin with a lemma inspired by~\cite[Lemma 3.2]{CHOI199025}   

\begin{lem}
Let $a,b,c,d\in\R_{\geq 0}$ be such that $a\geq b\geq c\geq d$ and $a+d = b+c$. For any unitaries $U_1,U_2\in\U(\C^2)$ there exist unitaries $V_1,V_2\in \U(\C^2)$ such that 
\[
a U_1XU^\dagger_1 + dU_2XU^\dagger_2 = bV_1XV^\dagger_1 + cV_2XV^\dagger_2
\]
for any $X\in \M(\C^2)$.
\label{lem:Trick}
\end{lem}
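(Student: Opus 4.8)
The plan is to strip the identity down to a single scalar equation about complex numbers and then solve that equation via the triangle inequality. Since the desired equality must hold for \emph{all} $X\in\M(\C^2)$, it is an identity of superoperators, and I would exploit the freedom to conjugate and change basis to simplify $U_1,U_2$.

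\textbf{Reduction to a diagonal problem.} First I would eliminate $U_2$: setting $W=U_2^\dagger U_1$ and multiplying the target identity on the left by $U_2$ and on the right by $U_2^\dagger$, it suffices to find $\tilde V_1,\tilde V_2$ with $aWXW^\dagger+dX=b\tilde V_1 X\tilde V_1^\dagger+c\tilde V_2 X\tilde V_2^\dagger$ for all $X$; the original solutions are then $V_j=U_2\tilde V_j$. Next, diagonalising $W=P\Lambda P^\dagger$ with $P\in\U(\C^2)$ and $\Lambda=\mathrm{diag}(e^{i\alpha},e^{i\beta})$, conjugation by $P$ (i.e.\ replacing $X$ by $P^\dagger X P$) reduces matters to $W=\Lambda$, with the $\tilde V_j$ recovered as $P\hat V_j P^\dagger$ from any solution $\hat V_1,\hat V_2$ of $a\Lambda X\Lambda^\dagger+dX=b\hat V_1 X\hat V_1^\dagger+c\hat V_2 X\hat V_2^\dagger$.

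\textbf{Collapse to a scalar equation.} The key idea is to look for \emph{diagonal} solutions $\hat V_j=\mathrm{diag}(e^{i\gamma_j},e^{i\delta_j})$. Conjugation by a diagonal unitary multiplies the $(k,l)$ entry of $X$ by a phase depending only on $k-l$, so the identity decouples entrywise. On the diagonal it reads $(a+d)x_{kk}=(b+c)x_{kk}$, which holds precisely because $a+d=b+c$; the two off-diagonal conditions are complex conjugates of one another and hence equivalent, reducing (with $\phi=\alpha-\beta$ and $\psi_j=\gamma_j-\delta_j$) to the single scalar equation
\[
a e^{i\phi}+d = b e^{i\psi_1}+c e^{i\psi_2}.
\]
Because only the differences $\psi_j$ appear, every pair of real values $(\psi_1,\psi_2)$ is realisable by a diagonal unitary, so the lemma reduces to writing the number $z:=a e^{i\phi}+d$ as a sum of a complex number of modulus $b$ and one of modulus $c$.

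\textbf{Solving the scalar equation and the main obstacle.} Such a representation exists exactly when $|b-c|\le|z|\le b+c$ (equivalently, a triangle with side lengths $b,c,|z|$ exists, i.e.\ the circles of radii $b$ and $c$ about $0$ and $z$ meet). Here the hypotheses enter: by the triangle inequality $|z|=|a e^{i\phi}+d|\le a+d=b+c$, and by the reverse triangle inequality $|z|\ge a-d$ (using $a\ge d$), while $a+d=b+c$ gives $a-b=c-d\ge0$ and hence $(a-d)-(b-c)=2(c-d)\ge0$, so that $|z|\ge a-d\ge b-c=|b-c|$. The reductions are routine bookkeeping; the crux I expect to be the observation that diagonal $V_j$ already suffice, collapsing the whole matrix identity onto this one scalar condition, together with the check that $|z|$ never leaves the admissible window $[|b-c|,b+c]$ — which is exactly where the ordering $a\ge b\ge c\ge d$ and the balance condition $a+d=b+c$ are both needed.
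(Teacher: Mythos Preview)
Your proof is correct and follows essentially the same route as the paper: both reduce by conjugation and diagonalisation to the case where the two given unitaries are the identity and a diagonal unitary, then take diagonal $V_j$'s so that the matrix identity collapses to the single scalar equation $a e^{i\phi}+d=b e^{i\psi_1}+c e^{i\psi_2}$ (the paper writes it as $a+d e^{i\theta}=b e^{i\alpha}+c e^{i\beta}$, which is the same after swapping which of $U_1,U_2$ one conjugates out). The only cosmetic difference is in the last step: you invoke the circle-intersection criterion $|b-c|\le|z|\le b+c$ and verify it directly from $a\ge b\ge c\ge d$ and $a+d=b+c$, whereas the paper expands $|a+d e^{i\theta}-b e^{i\alpha}|^2-c^2$ and runs an intermediate-value argument in $\alpha$; your version is a touch cleaner but not materially different.
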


\begin{proof}
We can assume without loss of generality that $d>0$ since otherwise the statement is trivial using $V_1=V_2=U_1$. By the spectral theorem, we can find a diagonal unitary $D\in \U(\C^2)$ such that $D = S^\dagger U_1^\dagger U_2S$ for a unitary matrix $S\in\U(\C^2)$. We will construct $W_1,W_2\in \U(\C^2)$ such that 
\begin{equation}
aX + dDXD^\dagger = bW_1XW^\dagger_1 + cW_2XW^\dagger_2
\label{equ:Bla}
\end{equation}   
for any $X\in\M(\C^2)$. Given such unitaries, setting $V_1 = U_1SW_1S^\dagger$ and $V_2 = U_1 SW_2S^\dagger$ would finish the proof. 

For $D = \text{diag}(z_1, z_2)$, we set $z_1\bar{z_2} = \text{exp}(i\theta)$ with $\theta\in\left[0,2\pi\right)$. Now we make the ansatz $W_1 = \text{diag}(\text{exp}(i\alpha),1)$ and $W_2 = \text{diag}(\text{exp}(i\beta),1)$ with $\alpha,\beta\in\left[ 0,2\pi\rb$. Inserting these matrices and using $a+b=c+d$, we find that \eqref{equ:Bla} holds iff
\[
a + d~\text{exp}(i\theta) - b~\text{exp}(i\alpha) = c~\text{exp}(i\beta).
\] 
Given $\theta\in\left[0,2\pi\right)$, we can find a $\alpha,\beta\in\left[ 0,2\pi\rb$ satisfying the previous equation if we can find $\alpha\in\left[ 0,2\pi\rb$ with 
\[
\left| a + d~\text{exp}(i\theta) - b~\text{exp}(i\alpha)\right| = c.
\] 
The previous equation is equivalent to 
\begin{equation}
a^2 + b^2 - c^2 + d^2 + 2ad~\text{cos}(\theta) - 2ab~\text{cos}(\alpha) - 2db\text{cos}(\theta -\alpha) = 0.
\label{equ:bla2}
\end{equation}
To show the existence of $\alpha\in\left[ 0,2\pi\rb$ satisfying \eqref{equ:bla2}, we note that for $\alpha=0$ the left hand side is
\[
(a-b)^2 - c^2 + d^2 + 2d(a-b)~\text{cos}(\theta) \leq (a-b)^2 - c^2 + d^2 + 2d(a-b) = 0. 
\] 
Similar for $\alpha = \theta$ the left hand side is
\[
(b-d)^2 + a^2 - c^2 - 2a(b-d)~\text{cos}(\theta) \geq (b-d)^2 + a^2 - c^2 - 2a(b-d) = 0. 
\] 
By continuity, there exists an $\alpha\in\left[ 0,\theta\right]$ such that \eqref{equ:bla2} is satisfied. This finishes the proof. 

\end{proof}

Before we state our main theorem we need some additional notation. Given a probability distribution $p=\lset p_i\rset^k_{i=1}\subset \R_{\geq 0}$, let $p^\downarrow\in (\R_{\geq 0})^k$ denote the vector obtained from $p$ by reordering the probabilities $p_i$ in decreasing order, i.e. such that $1\geq p^\downarrow_1\geq \cdots \geq p^\downarrow_k \geq 0$. We say that $p$ \emph{majorizes} another probability distribution $q=\lset q_i\rset^l_{i=1}\subset \R_{\geq 0}$ iff 
\[
\sum^j_{i=1}p^{\downarrow}_i\geq \sum^j_{i=1}q^{\downarrow}_{i}
\]  
for any $j\in \lset 1,\ldots , \min(k,l)\rset$, and we will write $p\succ q$ to denote this relation.\footnote{Note that $p\succ q$ is equivalent to $\left(q_1,\dots,q_k\right)\in\textrm{Conv}\left\{\left(p_{\sigma(1)},\dots,p_{\sigma(k)}\right):\sigma\in S_k\right\}$ where we append zeros to $p$ or $q$ as necessary to make them the same size (see \cite[Corollary B.3.]{marshall1979inequalities}).} The proof of the following theorem follows the lines of a standard argument in the theory of majorization (see \cite[Lemma B.1.]{marshall1979inequalities}), relating majorization to simple transformations known as T-transforms. We will repeat it here for the sake of completeness.

\begin{thm}
Suppose $T:\M(\C^2) \ra \M(\C^2)$ is given by 
\[
T(X) = \sum^k_{i=1} p_i U_iXU^\dagger_i
\]
for a probability distribution $p=\lset p_i\rset^k_{i=1}\subset \R_{\geq 0}$ and unitaries $U_1,U_2,\ldots ,U_k\in \U(\C^2)$. Then, for any probability distribution $q=\lset q_i\rset^l_{i=1}\subset \R_{\geq 0}$ satisfying $p\succ q$, there exist unitaries $V_1,V_2,\ldots ,V_l\in \U(\C^2)$ such that 
\[
T(X) = \sum^l_{i=1} q_iV_iXV^\dagger_i
\]
for any $X\in\M(\C^2)$.
\label{thm:Averaging}
\end{thm}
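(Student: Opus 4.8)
The plan is to recognise that Lemma~\ref{lem:Trick} is exactly the channel-level incarnation of a single averaging (``Robin Hood'') T-transform, and then to invoke the standard fact from majorization theory that $p\succ q$ implies that $q$ is reachable from $p$ by a finite chain of such T-transforms. First I would dispose of the bookkeeping. Since relabelling the summands $U_i$ (together with the corresponding probabilities) leaves $T$ unchanged, and since majorization is invariant under permutations, I may freely reorder terms. Appending a term of weight $0$ with an arbitrary unitary changes neither $T$ nor the distribution, so I would pad $p$ and $q$ with zeros to a common length $n=\max(k,l)$ and assume both are listed in decreasing order. At the very end I permute the constructed terms back into the given order of $q$ and discard the $n-l$ terms of weight $0$, recovering exactly $l$ unitaries $V_1,\dots,V_l$.

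The core is the reduction to T-transforms, which I would reprove following \cite[Lemma B.1.]{marshall1979inequalities}. Given a sorted distribution $r\succ q$ with $r\neq q$, let $j$ be the smallest index with $r_j\neq q_j$; the partial-sum inequalities then force $r_j>q_j$. Let $k>j$ be the smallest index with $r_k<q_k$, which exists because the total sums agree. Set $\delta=\min(r_j-q_j,\,q_k-r_k)>0$ and move $\delta$ from coordinate $j$ to coordinate $k$, producing $r'$ with $r'_j=r_j-\delta$ and $r'_k=r_k+\delta$. Using that $q$ is sorted, one has $r_j>q_j\geq q_k>r_k$, and a short check gives $r_j\geq r'_j\geq r'_k\geq r_k$ together with $r_j+r_k=r'_j+r'_k$, so this is a genuine averaging T-transform. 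One then verifies $r'\succ q$ still holds while $r'$ agrees with $q$ in strictly more coordinates (re-sorting between steps if needed, which is free), so the chain $p=r^{(0)},r^{(1)},\dots,r^{(m)}=q$ terminates after finitely many steps.

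Finally I would transport each T-transform to the channel. Assume inductively that a representation $T(X)=\sum_i r_i W_i X W_i^\dagger$ is already in hand. A single T-transform affects only the pair $(r_j,r_k)$, and the quadruple $(a,b,c,d)=(r_j,r'_j,r'_k,r_k)$ satisfies precisely the hypotheses $a\geq b\geq c\geq d$ and $a+d=b+c$ of Lemma~\ref{lem:Trick}. Applying that lemma to $a\,W_j X W_j^\dagger+d\,W_k X W_k^\dagger$ with $U_1=W_j$ and $U_2=W_k$ yields unitaries $V_1,V_2$ replacing $W_j,W_k$ and turns the weights into $(r'_j,r'_k)$, leaving every other term untouched. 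Iterating this over the whole chain converts the representation with weights $p$ into one with weights $q$, completing the argument.

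The hard part is conceptual rather than computational: seeing that Lemma~\ref{lem:Trick} is exactly the channel-level analogue of one averaging T-transform, so that the standard majorization-to-T-transform machinery transfers verbatim. Once that correspondence is fixed, the only points requiring care are the direction of each transform (the index choice must always decrease the larger of the two affected coordinates, which is what guarantees the ordering $a\geq b\geq c\geq d$) and the finiteness of the chain; both follow from the construction above, while the reordering and zero-padding are routine since they do not alter $T$.
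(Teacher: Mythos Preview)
Your proposal is correct and follows essentially the same approach as the paper: pad to a common length, realise $p\succ q$ as a finite chain of averaging T-transforms, and implement each T-transform on the channel via Lemma~\ref{lem:Trick}. The only cosmetic difference is in the index selection at each step (you pick the \emph{smallest} index where $r$ and $q$ differ, the paper picks the \emph{largest} index with $q_i<p_i$, which has the minor advantage that the intermediate distributions stay sorted without the re-sorting you allow for); both variants are standard and yield the same conclusion.
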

\begin{proof}
By appending zero probabilities to either $p$ or $q$ we can assume without loss of generality that $k=l$. We can also assume that $1\geq p_1\geq \cdots \geq p_k \geq 0$ and that $1\geq q_1\geq \cdots \geq q_k \geq 0$, and that $p\neq q$ (otherwise the statement is trivial). Since $p\succ q$, we know that the largest index $i$ for which $q_i\neq p_i$ satisfies $q_i>p_i$ and the we can define indices
\[
n:=\max\lset i~:~q_i < p_i\rset\hspace*{0.3cm} \text{ and } \hspace*{0.3cm}m:=\min\lset i>n~:~q_i > p_i\rset.
\]
Setting $\delta := \min(p_n-q_n,q_m - p_m)$, it is easy to check that
\[
p_n>p_n-\delta\geq q_n\geq q_m \geq p_m+\delta >p_m.
\]
Now we can apply Lemma \ref{lem:Trick} for
\[
a=p_n,\hspace{0.5cm} b=p_n-\delta,\hspace{0.5cm} c=p_m+\delta,\hspace{0.5cm} d=p_m,
\]
to find unitaries $V, W\in \U(\C^2)$ such that 
\[
p_n U_nX U^\dagger_n + p_m U_mXU^\dagger_m = (p_n-\delta)VXV^\dagger + (p_m+\delta)WXW^\dagger
\]
for any $X\in \M(\C^2)$. Thus, replacing the unitaries $U_n,U_m$ by $V,W$ and changing the coefficients accordingly yields a new expansion of $T$. Note that by definition of $n$ and $m$, for any $i\in \lset 1,\ldots, k\rset$ satisfying $n<i<m$ we have $q_i=p_i$ and the ordering
\begin{equation}
p_n>p_n-\delta\geq q_n\geq q_i = p_i \geq q_m \geq p_m+\delta >p_m
\label{equ:ordering}
\end{equation}
holds. This shows that the new distribution 
\[
(p_1,\ldots , p_{n-1}, p_n -\delta , p_{n+1} , \ldots , p_{m-1},p_m + \delta ,p_{m+1},\ldots ,p_k)
\]
is still decreasingly ordered and by \eqref{equ:ordering} this distribution still majorizes $q$. Note furthermore that either $p_n-\delta = q_n$ or $p_m+\delta =q_m$, so the the number of indices where the new distribution coincides with the `target' distribution $q$ has increased at least by $1$. Repeating this process at most $k-2$ more times finishes the proof.

\end{proof}

Since any probability distribution $p=\lset p_i\rset^k_{i=1}$ satisfies $p\succ \left(\frac{1}{k},\frac{1}{k},\ldots,\frac{1}{k}\right)$, we have the following corollary.


\begin{cor}
Suppose $T:\M(\C^2) \ra \M(\C^2)$ is given by 
\[
T(X) = \sum^k_{i=1} p_i U_iXU^\dagger_i
\]
for a probability distribution $\lset p_i\rset^k_{i=1}\subset \R_{\geq 0}$ and unitaries $U_1,U_2,\ldots ,U_k\in \U(\C^2)$. Then there exist unitaries $V_1,V_2,\ldots ,V_k\in \U(\C^2)$ such that 
\[
T(X) = \frac{1}{k}\sum^k_{i=1} V_iXV^\dagger_i
\]
for any $X\in\M(\C^2)$.
\label{cor:mean}
\end{cor}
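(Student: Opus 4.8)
The plan is to obtain this directly from Theorem~\ref{thm:Averaging} by taking the target distribution to be uniform. Concretely, I would set $l=k$ and choose $q=\lb\frac{1}{k},\ldots,\frac{1}{k}\rb$, so that the desired conclusion $T(X)=\frac{1}{k}\sum_{i=1}^k V_iXV_i^\dagger$ is precisely the conclusion of the theorem for this $q$. The only hypothesis of the theorem that then needs checking is $p\succ q$, namely that an arbitrary probability distribution on $k$ points majorizes the uniform one.

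To verify $p\succ q$, I would bound the partial sums of $p^\downarrow$ from below. Fix $j\in\lset 1,\ldots,k\rset$ and write $S_j=\sum_{i=1}^j p^\downarrow_i$. Since $p^\downarrow_j$ is the smallest among the $j$ largest entries, we have $p^\downarrow_j\leq S_j/j$, and hence $p^\downarrow_i\leq S_j/j$ for every $i>j$ as well. Summing the remaining entries gives
\[
1 = S_j + \sum_{i=j+1}^k p^\downarrow_i \leq S_j + (k-j)\frac{S_j}{j} = \frac{k}{j}\,S_j,
\]
so that $S_j\geq j/k=\sum_{i=1}^j q^\downarrow_i$. As this holds for all $j$, the majorization condition $p\succ q$ is satisfied.

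With $p\succ q$ in hand, Theorem~\ref{thm:Averaging} immediately supplies unitaries $V_1,\ldots,V_k\in\U(\C^2)$ realizing the uniform expansion, which is exactly the claim. There is essentially no obstacle: the uniform distribution is the $\succ$-minimal element among distributions on $k$ points, so the majorization hypothesis holds automatically and all the substantive work has already been carried out in the theorem. (This is the same observation recorded in the sentence preceding the corollary, and the computation above just makes it explicit.)
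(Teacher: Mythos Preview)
Your proposal is correct and follows exactly the paper's approach: the paper simply notes that any probability distribution $p$ majorizes the uniform distribution $\lb\frac{1}{k},\ldots,\frac{1}{k}\rb$ and invokes Theorem~\ref{thm:Averaging}. Your explicit verification of $p\succ q$ is a welcome elaboration, but the argument is the same.
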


To state our main result, recall that given a quantum channel $T:\M(\C^{d})\ra \M(\C^{d})$, its Kraus rank is defined as the unique number $R(T)\in\lset 1,\ldots , d^2\rset$ such that
\[
T(X) = \sum^{R(T)}_{i=1} A_iXA^\dagger_i
\]  
for any $X\in \M(\C^{d})$ where $A_i\in \M(\C^d)$ are non-zero, mutually orthogonal\footnote{with respect to the Hilbert-Schmidt inner product.} matrices. Then we obtain:

\begin{cor}
For any unital quantum channel $T:\M(\C^2)\ra \M(\C^2)$ and any $k\geq R(T)$, there exists a unitary $U\in\mathcal{U}\lb \C^2\otimes \C^k\rb$ such that 
\[
T(X) = \text{Tr}_E\lb U\lb X\otimes \frac{\mathbbm{1}_k}{k}\rb U^\dagger\rb
\]
for any $X\in \M(\C^2)$. In particular, since $R(T)\leq 4$, every such quantum channel is a $4$-noisy operation.
\label{cor:MAIN}
\end{cor}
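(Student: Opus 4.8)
The plan is to turn $T$ into an equal-weight mixed-unitary channel on exactly $k$ terms, to which Corollary~\ref{cor:mean} applies, and then to dilate that representation to a controlled unitary on $\C^2\otimes\C^k$. The one genuinely channel-theoretic input is a mixed-unitary decomposition of $T$ into \emph{exactly} $R(T)$ unitaries; everything after that is bookkeeping.

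For this first step I would appeal to the structure theory of qubit channels. Since $T$ is unital it is mixed unitary \cite{ruskai2002analysis}, and by the associated canonical form it factors as $T=\Phi_u\circ\Lambda\circ\Phi_v$, where $\Phi_u(Y)=uYu^\dagger$ and $\Phi_v(Y)=vYv^\dagger$ are unitary conjugations and $\Lambda(Y)=\sum_{j=0}^3\lambda_j\,\sigma_j Y\sigma_j^\dagger$ is a Pauli channel with $\lambda_j\geq 0$ and $\sum_j\lambda_j=1$. Conjugation by fixed unitaries preserves both the Kraus rank and the Hilbert--Schmidt orthogonality of Kraus operators, and the operators $\sqrt{\lambda_j}\,\sigma_j$ are mutually orthogonal, so the number of nonzero $\lambda_j$ equals $R(T)$. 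Absorbing $u$ and $v$ into the Paulis and keeping only the nonzero weights gives
\[
T(X)=\sum_{i=1}^{R(T)} p_i\,U_i X U_i^\dagger,\qquad p_i>0,\quad U_i\in\U(\C^2).
\]

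Next, given any $k\geq R(T)$, I would pad this to a genuine $k$-term distribution by appending $k-R(T)$ summands of weight $0$ (with, say, $U_i=\mathbbm{1}$), so that $\lset p_i\rset_{i=1}^k$ is a probability distribution. Corollary~\ref{cor:mean} then supplies unitaries $V_1,\ldots,V_k\in\U(\C^2)$ with $T(X)=\frac{1}{k}\sum_{i=1}^k V_i X V_i^\dagger$. Finally I would set $U=\sum_{i=1}^k V_i\otimes\proj{i}{i}\in\U(\C^2\otimes\C^k)$, which is unitary as a controlled unitary; using $\proj{i}{i}\tfrac{\mathbbm{1}_k}{k}\proj{j}{j}=\tfrac{1}{k}\delta_{ij}\proj{i}{i}$ one gets $U\lb X\otimes\tfrac{\mathbbm{1}_k}{k}\rb U^\dagger=\frac{1}{k}\sum_i V_i X V_i^\dagger\otimes\proj{i}{i}$, and $\text{Tr}_E$ of this equals $\frac{1}{k}\sum_i V_i X V_i^\dagger=T(X)$. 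Thus $T$ is a $k$-noisy operation, and since $R(T)\leq 2^2=4$ the choice $k=4$ proves the final assertion.

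The main obstacle is the very first step. The majorization arguments of Theorem~\ref{thm:Averaging} and Corollary~\ref{cor:mean} only redistribute an \emph{already given} mixed-unitary decomposition and are blind to the Kraus rank, so the sharp bound ``any $k\geq R(T)$'' stands or falls with the claim that a unital qubit channel decomposes into exactly $R(T)$ unitaries. This is precisely where two-dimensionality enters, through the canonical form; the analogous result fails in dimension $d\geq 3$, where there exist mixed unitary channels that are not $n$-noisy for any $n$ \cite{Magdalena2017}.
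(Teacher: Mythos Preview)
Your proof is correct and follows essentially the same route as the paper: obtain a mixed-unitary decomposition of $T$ into exactly $R(T)$ unitaries via the canonical form of \cite{ruskai2002analysis}, extend to $k$ terms, apply Corollary~\ref{cor:mean} (equivalently Theorem~\ref{thm:Averaging}) to equalize the weights, and dilate via the controlled unitary $U=\sum_i V_i\otimes\proj{i}{i}$. The only cosmetic differences are that you spell out the Pauli-channel canonical form explicitly (the paper just cites \cite[Section~3.3]{ruskai2002analysis}) and that you pad with zero-weight terms to reach $k>R(T)$, whereas the paper splits a nonzero probability; both are fine.
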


\begin{proof}
By \cite[Section 3.3]{ruskai2002analysis} every unital qubit channel $T:\M(\C^2)\ra \M(\C^2)$ admits a mixed unitary Kraus decomposition, i.e. for any $k\geq R(T)$ there are unitaries $U_1,\ldots , U_k\in\mathcal{U}(\C^{2})$ and a probability distribution $\lset p_i\rset^k_{i=1}\subset \R_{>0}$ such that 
\[
T(X) = \sum^k_{i=1} p_i U_i XU^\dagger_i
\]
for any $X\in\M(\C^2)$. Note that in order to realize $k>R(T)$ we can simply split a non-zero probability, leading to some of the $U_i$ to coincide. Applying Theorem \ref{thm:Averaging} we can find unitaries $V_1,\ldots ,V_k\in\mathcal{U}(\C^{2})$ such that 
\[
T(X) = \frac{1}{k}\sum^k_{i=1} V_i X V^\dagger_i
\] 
for any $X\in\M(\C^2)$. Therefore we have  
\[
T(X) = \text{Tr}_E\lbr U\left(X\otimes \frac{\mathbbm{1}_k}{k}\right) U^\dagger\rbr
\]
with the unitary $U = \sum^k_{i=1} V_i\otimes \ketbra{i}\in \mathcal{U}(\C^2\otimes \C^k)$. 
\end{proof}

Let us briefly compare our Corollary \ref{cor:mean} with the results of~\cite{CHOI199025}. Given a quantum channel $T:\M(\C^d)\ra \M(\C^d)$ consider the eigendecomposition of its (normalized) Choi matrix (see \cite{choi1975completely})
\[
C_T = \sum^{R(T)}_{i=1}p_i \proj{\psi_i}{\psi_i}
\]
for a probability distribution $\lset p_i\rset^{R(T)}_{i=1}\subset \R_{>0}$ and orthonormal vectors $\ket{\psi_i}\in \C^d$. Applying~\cite[Corollary 3.3]{CHOI199025} to this convex combination of rank-$1$ projectors leads to a decomposition of $C_T$ as an average of rank-$1$ projectors (i.e. with equal weights $1/R(T)$). Using the Choi-Jamiolkowski isomorphism to relate the rank-$1$ projectors appearing in this decomposition to completely positive maps of the form $X\mapsto AXA^\dagger$ yields the decomposition
\[
T(X) = \frac{1}{R(T)}\sum^{R(T)}_{i=1} A_iXA^{\dagger}_i 
\]
for any $X\in \M(\C^d)$ with matrices $A_i\in \M(\C^d)$ satisfying $\text{Tr}(A^\dagger_i A_i) = d$. Our proof shows that for $d=2$ and a mixed unitary channel $T:\M(\C^2)\ra \M(\C^2)$ a stronger result holds where the $A_i$ in the above decomposition are unitary. Finally, it should also be noted that our Theorem \ref{thm:Averaging} and Corollary \ref{cor:mean} are similar in flavour to results obtained for convex combinations of unitary matrices~\cite{kadison1986means}.

\section{Kraus ranks of $n$-noisy operations} 
\label{sec:Kraus}

For a bipartite operator $Y\in \M(\C^{d_1}\otimes \C^{d_2})$ the operator Schmidt rank is defined as the unique number $\Omega(Y)\in\N$ such that   
\[
Y = \sum^{\Omega(Y)}_{i=1} A_i\otimes B_i
\]
with orthogonal sets (w.r.t. the Hilbert-Schmidt inner product) of non-zero operators $\lset A_i\rset^{\Omega(Y)}_{i=1}\subset \M(\C^{d_1})$ and $\lset B_i\rset^{\Omega(Y)}_{i=1}\subset \M(\C^{d_2})$. Note that $\Omega(Y)\leq \min(d_1,d_2)^2$.
 
Musz et al.~\cite{musz2013unitary} studied the relation between the operator Schmidt decomposition of a unitary operator $U\in \U(\C^d\otimes \C^n)$ and the Kraus rank of the $n$-noisy operation it generates (cf.~\eqref{equ:nNoisy}). For completeness we will present here the proofs of some results from \cite{musz2013unitary}. For the class of $n$-noisy operations the following holds:

\begin{lem}[\cite{musz2013unitary}]
Let $T:\M(\C^d)\ra \M(\C^d)$ denote an $n$-noisy operation of the form \eqref{equ:nNoisy} with a unitary operator $U\in \U(\C^d\otimes \C^n)$. Then the Kraus rank of $T$ satisfies $R(T)=\Omega(U)$.
\end{lem}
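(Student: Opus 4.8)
The plan is to produce an explicit Kraus representation of $T$ directly from the operator Schmidt decomposition of $U$, and then to identify $R(T)$ with the dimension of the linear span of the resulting Kraus operators, exploiting the orthogonality built into both decompositions.

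First I would write the operator Schmidt decomposition $U = \sum_{i=1}^{\Omega(U)} A_i\otimes B_i$ with Hilbert--Schmidt-orthogonal families of non-zero operators $\lset A_i\rset\subset \M(\C^d)$ and $\lset B_i\rset\subset \M(\C^n)$. Fixing an orthonormal basis $\lset \ket{k}\rset_{k=1}^n$ of $\C^n$ and writing the maximally mixed state as $\frac{\id_n}{n} = \frac{1}{n}\sum_k \ket{k}\bra{k}$, I would read off from \eqref{equ:nNoisy} that the operators
\[
M_{jk} := \frac{1}{\sqrt n}\,(\id_d\otimes \bra{j})\,U\,(\id_d\otimes \ket{k}), \qquad j,k\in\lset 1,\ldots,n\rset,
\]
form a Kraus representation of $T$, i.e.\ $T(X) = \sum_{j,k} M_{jk} X M_{jk}^\dagger$ for all $X$. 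Substituting the operator Schmidt decomposition yields the key identity $M_{jk} = \frac{1}{\sqrt n}\sum_i (B_i)_{jk}\, A_i$, where $(B_i)_{jk}=\bra{j}B_i\ket{k}$, expressing every Kraus operator as a fixed linear combination of the $A_i$.

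The core of the argument is then to compute $\dim \text{span}\lset M_{jk}\rset$, since $R(T)$ equals the dimension of the span of the Kraus operators in any Kraus representation (equivalently, the rank of the Choi matrix, which is representation-independent). The inclusion $\text{span}\lset M_{jk}\rset\subseteq \text{span}\lset A_i\rset$ is immediate from the key identity, giving $R(T)\leq \Omega(U)$. For the reverse inclusion I would show that the coefficient vectors $v_{jk} := \lb (B_1)_{jk},\ldots,(B_{\Omega(U)})_{jk}\rb\in \C^{\Omega(U)}$ span all of $\C^{\Omega(U)}$: any $c=(c_1,\ldots,c_{\Omega(U)})$ annihilating every $v_{jk}$ satisfies $\sum_i c_i (B_i)_{jk}=0$ for all $j,k$, i.e.\ $\sum_i c_i B_i = 0$ as a matrix, which forces $c=0$ by linear independence of the $B_i$. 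Combined with the linear independence of the $A_i$, this upgrades the inclusion to $\text{span}\lset M_{jk}\rset = \text{span}\lset A_i\rset$, and hence $R(T)=\Omega(U)$.

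The step requiring the most care is the reverse inequality $R(T)\geq \Omega(U)$: one must rule out that cancellations among the $(B_i)_{jk}$ collapse the span of the $M_{jk}$ below $\Omega(U)$, and also justify that the Kraus rank is genuinely the span dimension rather than merely the count $n^2$ of operators written down. Both points resolve cleanly once the linear independence of the Schmidt factors $B_i$ is invoked, which is the crux of the whole statement.
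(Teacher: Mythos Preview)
Your argument is correct, but the paper's proof is more direct. Rather than passing through the $n^2$ ``standard'' Kraus operators $M_{jk}$ and then computing the dimension of their span, the paper simply inserts the operator Schmidt decomposition $U=\sum_i A_i\otimes B_i$ into \eqref{equ:nNoisy} and uses the Hilbert--Schmidt orthogonality of the $B_i$ (normalised so that $\tr[B_iB_j^\dagger]=n\delta_{ij}$) to annihilate all cross terms in the partial trace, yielding $T(X)=\sum_{i=1}^{\Omega(U)} A_i X A_i^\dagger$ in one step; since the $A_i$ are themselves orthogonal, this is already a minimal Kraus decomposition and $R(T)=\Omega(U)$ follows immediately. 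Your route uses only the \emph{linear independence} of the Schmidt factors (and the general fact that the Kraus rank equals the span dimension of any Kraus family), which is a more portable technique but unnecessary here: the full HS-orthogonality supplied by the Schmidt decomposition lets one skip the intermediate $M_{jk}$ representation and the span argument entirely.
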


\begin{proof}
By the operator Schmidt decomposition we have
\begin{equation*}
U=\sum_{i=1}^{\Omega(U)} A_i\otimes B_i,
\end{equation*}
where $\lset A_i\rset^{\Omega(Y)}_{i=1}\subset \M(\C^{d_1})$ and $\lset B_i\rset^{\Omega(Y)}_{i=1}\subset \M(\C^{d_2})$ are sets of orthogonal matrices. We can assume without loss of generality that $\tr\left[B_i B^\dagger_j\right]=n\delta_{ij}$. Inserting this decomposition into~\eqref{equ:nNoisy}, we obtain the Kraus decomposition
\begin{equation*}
T\left(X\right)=\sum_{i=1}^{\Omega(U)} A_i X A_i^\dagger.
\end{equation*}
By orthogonality of the $A_i$, we conclude that $R(T)=\Omega(U)$.
\end{proof}

It was shown by D\"ur et al.~\cite{dur2002optimal} that the operator Schmidt rank of a unitary $U\in \U(\C^2\otimes \C^2)$ satisfies $\Omega(U)\in \lset 1,2,4\rset$, i.e. no such unitary can have operator Schmidt rank $3$. It was observed in \cite{musz2013unitary}, that in combination with the previous theorem this implies:

\begin{thm}[\cite{musz2013unitary}]
No $2$-noisy operation $T:\M(\C^2)\ra \M(\C^2)$ has Kraus rank $3$. 
\end{thm}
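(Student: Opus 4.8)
The plan is to read off the statement as an immediate combination of the two facts assembled just before it: the preceding lemma, which identifies the Kraus rank of an $n$-noisy operation with the operator Schmidt rank of the generating unitary, and the classification of D\"ur et al.~\cite{dur2002optimal} of the possible operator Schmidt ranks of a two-qubit unitary. No new machinery is needed.

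Concretely, first I would recall that by definition a $2$-noisy operation $T:\M(\C^2)\ra\M(\C^2)$ is of the form \eqref{equ:nNoisy} with $n=2$, i.e.\ it is generated by some $U\in\U(\C^2\otimes\C^2)$. Applying the preceding lemma with $d=n=2$ gives $R(T)=\Omega(U)$. Since $U$ is a unitary on $\C^2\otimes\C^2$, the result of D\"ur et al.\ forces $\Omega(U)\in\lset 1,2,4\rset$, and in particular $\Omega(U)\neq 3$. Hence $R(T)=\Omega(U)\neq 3$, which is the claim.

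The only substantive ingredient is thus the input $\Omega(U)\in\lset 1,2,4\rset$, which I am entitled to assume. Were one to want a self-contained argument for it, I would reduce to the canonical (Cartan/KAK) form $U=(u_1\otimes u_2)\exp\!\big(i\sum_{k}c_k\,\sigma_k\otimes\sigma_k\big)(v_1\otimes v_2)$; since the local unitaries $u_1\otimes u_2$ and $v_1\otimes v_2$ leave the operator Schmidt rank invariant, it suffices to analyse the central factor. Because the operators $\sigma_x\otimes\sigma_x$, $\sigma_y\otimes\sigma_y$ and $\sigma_z\otimes\sigma_z$ commute, expanding this factor in the (orthogonal) Pauli basis yields $\alpha_0\,\one\otimes\one+\alpha_1\,\sigma_x\otimes\sigma_x+\alpha_2\,\sigma_y\otimes\sigma_y+\alpha_3\,\sigma_z\otimes\sigma_z$, so that $\Omega(U)$ equals the number of nonzero coefficients $\alpha_j$. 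The main obstacle is then the purely computational check that, for the explicit trigonometric expressions for the $\alpha_j$ in terms of $c_1,c_2,c_3$, it is impossible for exactly three of them to vanish while the fourth survives: whenever one $\alpha_j$ is forced to zero a second one vanishes as well, ruling out $\Omega(U)=3$.
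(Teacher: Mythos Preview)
Your proof is correct and matches the paper's own argument exactly: the theorem is stated there as an immediate consequence of the preceding lemma (Kraus rank equals operator Schmidt rank of the generating unitary) combined with the D\"ur et al.\ result that $\Omega(U)\in\lset 1,2,4\rset$ for $U\in\U(\C^2\otimes\C^2)$.

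One small slip in your optional KAK sketch: since $\Omega(U)$ counts the \emph{nonzero} $\alpha_j$, ruling out $\Omega(U)=3$ means showing that exactly \emph{one} of the $\alpha_j$ cannot vanish (not three, as you wrote). Your following clause --- ``whenever one $\alpha_j$ is forced to zero a second one vanishes as well'' --- is the correct statement and does the job.
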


Note that the unital qubit channel $T:\M(\C^2)\ra \M(\C^2)$ given by 
\begin{equation*}
T\left(X\right)=\frac{1}{3}\sum_{i=1}^{3} \sigma_i X \sigma_i, 
\end{equation*}
has Kraus rank $3$. Here, $\left\{\sigma_i\right\}_{i=1}^{3}$ denote the Pauli matrices\footnote{$\sigma_1=\begin{pmatrix} 0 & 1 \\ 1 & 0
\end{pmatrix}$, $\sigma_2=\begin{pmatrix} 0 & -i \\ i & 0
\end{pmatrix}$, $\sigma_3=\begin{pmatrix} 1 & 0 \\ 0 & -1
\end{pmatrix}$.}. Therefore, by the previous theorem, not every unital qubit channel is a $2$-noisy operation.

Unfortunately, the operator Schmidt decomposition cannot be used to restrict the Kraus rank of $n$-noisy operations $T:\M(\C^d)\ra \M(\C^d)$ for $d>2$ as there are no further obstructions on the operator Schmidt rank of unitary operators~\cite{muller2016restrictions}.

\section*{Acknowledgments}

We thank Magdalena Musat for interesting discussions and comments that helped us to improve this manuscript. We also thank Karol \.Zyczkowski for bringing up a question regarding the relationship between unital qubit channels and $n$-noisy operations at Institut Henri Poincar\'e during the trimester on ``Analysis in Quantum Information Theory''. We acknowledge financial support from the European Research Council (ERC Grant Agreement no 337603), the Danish Council for Independent Research (Sapere Aude) and VILLUM FONDEN via the QMATH Centre of Excellence (Grant No. 10059).

\bibliographystyle{IEEEtran}
\bibliography{mybibliography}

\begin{thebibliography}{10}
\providecommand{\url}[1]{#1}
\csname url@samestyle\endcsname
\providecommand{\newblock}{\relax}
\providecommand{\bibinfo}[2]{#2}
\providecommand{\BIBentrySTDinterwordspacing}{\spaceskip=0pt\relax}
\providecommand{\BIBentryALTinterwordstretchfactor}{4}
\providecommand{\BIBentryALTinterwordspacing}{\spaceskip=\fontdimen2\font plus
\BIBentryALTinterwordstretchfactor\fontdimen3\font minus
  \fontdimen4\font\relax}
\providecommand{\BIBforeignlanguage}[2]{{%
\expandafter\ifx\csname l@#1\endcsname\relax
\typeout{** WARNING: IEEEtran.bst: No hyphenation pattern has been}%
\typeout{** loaded for the language `#1'. Using the pattern for}%
\typeout{** the default language instead.}%
\else
\language=\csname l@#1\endcsname
\fi
#2}}
\providecommand{\BIBdecl}{\relax}
\BIBdecl

\bibitem{horodecki2003reversible}
M.~Horodecki, P.~Horodecki, and J.~Oppenheim, ``Reversible transformations from
  pure to mixed states and the unique measure of information,'' \emph{Physical
  Review A}, vol.~67, no.~6, p. 062104, 2003.

\bibitem{zyczkowski2004duality}
K.~{\.Z}yczkowski and I.~Bengtsson, ``On duality between quantum maps and
  quantum states,'' \emph{Open Systems \& Information Dynamics}, vol.~11,
  no.~01, pp. 3--42, 2004.

\bibitem{musz2013unitary}
M.~Musz, M.~Ku{\'s}, and K.~{\.Z}yczkowski, ``Unitary quantum gates, perfect
  entanglers, and unistochastic maps,'' \emph{Physical Review A}, vol.~87,
  no.~2, p. 022111, 2013.

\bibitem{anantharaman2006ergodic}
C.~Anantharaman-Delaroche, ``On ergodic theorems for free group actions on
  noncommutative spaces,'' \emph{Probability Theory and Related Fields}, vol.
  135, no.~4, pp. 520--546, 2006.

\bibitem{haagerup2011factorization}
U.~Haagerup and M.~Musat, ``Factorization and dilation problems for completely
  positive maps on von {N}eumann algebras,'' \emph{Communications in
  Mathematical Physics}, vol. 303, no.~2, pp. 555--594, 2011.

\bibitem{haagerup2015asymptotic}
------, ``An asymptotic property of factorizable completely positive maps and
  the {C}onnes embedding problem,'' \emph{Communications in Mathematical
  Physics}, vol. 338, no.~2, pp. 721--752, 2015.

\bibitem{landau1993birkhoff}
L.~Landau and R.~Streater, ``On {B}irkhoff's theorem for doubly stochastic
  completely positive maps of matrix algebras,'' \emph{Linear algebra and its
  applications}, vol. 193, pp. 107--127, 1993.

\bibitem{chefles2002quantum}
A.~Chefles, ``Quantum operations, state transformations and probabilities,''
  \emph{Physical Review A}, vol.~65, no.~5, p. 052314, 2002.

\bibitem{scharlau2016quantum}
J.~Scharlau and M.~P. Mueller, ``Quantum {H}orn's lemma, finite heat baths, and
  the third law of thermodynamics,'' \emph{arXiv preprint arXiv:1605.06092},
  2016.

\bibitem{Magdalena2017}
M.~Musat, \emph{In preparation}.

\bibitem{ruskai2002analysis}
M.~B. Ruskai, S.~Szarek, and E.~Werner, ``An analysis of completely-positive
  trace-preserving maps on {M}2,'' \emph{Linear Algebra and its Applications},
  vol. 347, no. 1-3, pp. 159--187, 2002.

\bibitem{CHOI199025}
\BIBentryALTinterwordspacing
M.-D. Choi and P.~Y. Wu, ``Convex combinations of projections,'' \emph{Linear
  Algebra and its Applications}, vol. 136, no. Supplement C, pp. 25 -- 42,
  1990. [Online]. Available:
  \url{http://www.sciencedirect.com/science/article/pii/0024379590900199}
\BIBentrySTDinterwordspacing

\bibitem{marshall1979inequalities}
A.~W. Marshall, I.~Olkin, and B.~C. Arnold, ``Inequalities: theory of
  majorization and its applications,'' \emph{New York}, 1979.

\bibitem{choi1975completely}
M.-D. Choi, ``Completely positive linear maps on complex matrices,''
  \emph{Linear algebra and its applications}, vol.~10, no.~3, pp. 285--290,
  1975.

\bibitem{kadison1986means}
R.~V. Kadison and G.~K. Pedersen, ``Means and convex combinations of unitary
  operators,'' \emph{Mathematica Scandinavica}, pp. 249--266, 1986.

\bibitem{dur2002optimal}
W.~D{\"u}r, G.~Vidal, and J.~I. Cirac, ``Optimal conversion of nonlocal unitary
  operations,'' \emph{Physical Review Letters}, vol.~89, no.~5, p. 057901,
  2002.

\bibitem{muller2016restrictions}
A.~M{\"u}ller-Hermes and I.~Nechita, ``Restrictions on the {S}chmidt rank of
  bipartite unitary operators beyond dimension two,'' \emph{arXiv preprint
  arXiv:1612.07616}, 2016.

\end{thebibliography}

\end{document}